\documentclass[11pt]{article}

\usepackage[linesnumbered,vlined,ruled]{algorithm2e}
\usepackage{multirow,amsmath,amsfonts}

\usepackage{amsmath}
\usepackage{amssymb}
\usepackage{latexsym}
\usepackage{epsfig}
\usepackage{array}
\usepackage{color}
\usepackage{enumerate}
\usepackage{fullpage}
\usepackage[margin=1in]{geometry}

\newtheorem{definition}{Definition}

\newtheorem{theorem}{Theorem}
\newtheorem{lemma}{Lemma}

\newtheorem{corollary}{Corollary}

\newlength{\boxwidth}
\setlength{\boxwidth}{\linewidth}
\addtolength{\boxwidth}{7em}

\makeatletter
\DeclareRobustCommand{\qed}{%
  \ifmmode %
  \else \leavevmode\unskip\penalty9999 \hbox{}\nobreak\hfill
  \fi
  \quad\hbox{\qedsymbol}}
\newcommand{\openbox}{\leavevmode
  \hbox to.77778em{%
  \hfil\vrule \vbox to.675em{\hrule width.6em\vfil\hrule}%
  \vrule\hfil}}
\newcommand{\qedsymbol}{\openbox}

\newenvironment{proof}[1][\proofname]{\par \normalfont
  \topsep6\p@\@plus6\p@ \trivlist
  \item[\hskip\labelsep\bfseries\itshape #1.]\ignorespaces }{%
  \qed\endtrivlist }

\DeclareMathOperator{\supp}{Supp}

\newcommand{\proofname}{Proof}

\newcommand{\hide}[1]{}

\newcommand{\NE}{Nash equilibrium}
\newcommand{\NA}{Nash equilibria}
\newcommand{\CC}{communication complexity}
\newcommand{\ANE}{$\epsilon$-approximate Nash equilibrium}

\newcommand{\NNE}{-approximate Nash equilibrium}
\newcommand{\WSNE}{well-supported Nash equilibrium}
\newcommand{\player}{p}
\newcommand{\T}{{\rm T}}

\renewcommand{\vec}[1]{\mathbf{#1}}
\newcommand{\argmin}{\mathop{\rm argmin}}
\newcommand{\argmax}{\mathop{\rm argmax}}
\newcommand{\vc}[1]{\mathbf{#1}}
\newcommand{\bm}[1]{\mbox{\boldmath{$#1$}}}
\newcommand{\noitemsep}{\setlength{\itemsep}{-\parskip}}

\begin{document}

\title{On the Communication Complexity of Approximate Nash
Equilibria\footnote{A preliminary version of this paper appeared in the
Proceedings of the 5th SAGT. The first author was supported by EPSRC Grant
EP/G069239/1 ``Efficient Decentralised Approaches in Algorithmic Game Theory''}}

\author{Paul W. Goldberg$^1$, Arnoud Pastink$^2$\\ \\ 
$^1$ University of Liverpool\\
Dept.\ of Computer Science \\
Ashton Street, Liverpool L69 3BX, U.\,K.\\
\texttt{P.W.Goldberg@liverpool.ac.uk} \\ \\
$^2$ Utrecht University \\
Department of Information and Computing Science \\
P.O. Box 80089, 3508TB Utrecht, The Netherlands \\
\texttt{A.J.Pastink@uu.nl}
}


\maketitle

\def\sm{\setminus}
\def\C{{\vec c}}
\def\Cu{{c}}
\def\Ci{{c}}
\newcommand{\Co}[1] {c_{#1}}
\def\e{e}

\def\B{{\vec b}}
\def\Bu{{\vec b}}
\def\Bi{{b}}
\def\Bipp{{b''}}
\newcommand{\Bo}[1] {b_{#1}}
\newcommand{\Buo}[2] {b^{#1}_{#2}}

\def\mm{\mathcal{M}}
\def\me{\mathcal{E}}
\def\mf{\mathcal{F}}

\def\SI{\sigma} 
\def\SIGA{\alpha}
\def\SIGB{\beta}
\def\SIGC{\gamma}

\def\CSA{\C^{\SIGA}}
\def\CSB{\C^{\SIGB}}
\def\CSC{\C^{\SIGC}}

\def\BSA{\B^{\SIGA}}
\def\BSB{\B^{\SIGB}}
\def\BSC{\B^{\SIGC}}
\def\BS{\B^{\SI}}
\def\BSI{\Bi^{\SI}}
\def\BSS{\Buo{\SI}{S}}
\def\BSO{\Buo{\SI}}

\def\BMIN{\B^{\min}} 
\def\BMINi{\Bi^{\min}} 

\newcommand{\BMINo}[1] {\Buo{\min}{#1}}
\def\BMINp{\B'^{\min}} 
\def\BMINpi{\Bi'^{\min}} 
\def\BMINpo{\Buop{\min}} 

\def\m{k} 

\setlength{\tabcolsep}{.5em}

\begin{abstract}
We study the problem of computing approximate Nash equilibria of bimatrix games, in a setting
where players initially know their own payoffs but not the payoffs of the other player.
In order for a solution of reasonable quality to be found, some
amount of communication needs to take place between the players.
We are interested in algorithms where the communication is substantially less
than the contents of a payoff matrix, for example logarithmic in the size of the matrix.
When the communication is polylogarithmic in the number of strategies $n$,
we show how to obtain \ANE\ for $\epsilon$ approximately $0\cdotp438$, and for well-supported
approximate equilibria we obtain $\epsilon$ approximately $0\cdotp732$.
For one-way communication
we show that $\epsilon=\frac{1}{2}$ is achievable, but no constant improvement
over $\frac{1}{2}$ is possible, even with unlimited one-way communication.
For well-supported equilibria, no value of $\epsilon<1$ is achievable with
one-way communication.
When the players do not communicate at all, $\epsilon$-Nash equilibria
can be obtained for $\epsilon=\frac{3}{4}$, and we also give a lower bound of slightly
more than $\frac{1}{2}$ on the lowest constant $\epsilon$ achievable.
\end{abstract}

\section{Introduction}

Algorithmic game theory is concerned not just with properties of a solution concept,
but also how that solution can be obtained.
It is considered desirable that the outcome of a game should be ``easy to compute'',
which is typically formalised as polynomial-time computability, in the
algorithms community.
In that respect the PPAD-completeness results of~\cite{Daskalakis2006,Chen2006}
are interpreted as a ``complexity-theoretic critique'' of Nash equilibrium.
Following those results, a line of work addressed the problem of computing
$\epsilon$-Nash equilibrium, where $\epsilon>0$ is a parameter that bounds
a player's incentive to deviate, in a solution. Thus, $\epsilon$-Nash equilibrium
imposes a weaker constraint on how players are assumed to behave, and an exact
Nash equilibrium is obtained for $\epsilon=0$. The main open problem is to find out
what values of $\epsilon$ admit a polynomial-time algorithm. Below we
summarise some of the progress in this direction.

Beyond the existence of a fast algorithm, it is also desirable that a solution
should be obtained by a process that is simple and decentralised, since that is
likely to be a better model for how players in a game may eventually reach
a solution. In that respect, most of the known efficient algorithms for computing
$\epsilon$-Nash equilibria are not entirely satisfying. They take as input
the payoff matrices and output the approximate Nash equilibrium. If we try to
translate such an algorithm into real life, it would correspond to a process
where the players pass their payoffs to a central authority, which returns to
them some mixed strategies that have the ``low incentive to deviate'' guarantee.
In this paper we aim to model a setting where players perform individual
computations and exchange some limited information. We revisit the question
of what values of $\epsilon$ are achievable, subject to this restriction to
more ``realistic'' algorithms.

There are various ways in which one can try to model the notion of a
decentralised algorithm; here we consider a general approach that has previously
been studied in~\cite{Conitzer2004,Hart2010} in the context of computing exact
Nash equilibria. The players begin with knowledge of their own payoffs
but not the payoffs of the other players; this is often called an {\em uncoupled}
setting (see Section~\ref{sec:uncoupled} for an overview).
An algorithm involves communication
in addition to computation; to find a game-theoretic solution, a player
usually has to know something about the other players' matrices, but
hopefully not all of that information. We study the computation of $\epsilon$-Nash
equilibria in this setting, and the general topic is the trade-off between
the amount of communication that takes place, and the value of $\epsilon$ that
can be obtained. In uncoupled settings, there are natural dynamic processes
that converge to correlated equilibria, but the results are less positive
for exact Nash equilibria, so this paper can be seen as an investigation into
approximate Nash equilibrium as an alternative to correlated equilibrium,
as a solution concept.

\subsection{Definitions}\label{sec:defs}

We consider 2-player games, with a \textit{row player} and a
\textit{column player}, who both have $n$ \textit{pure strategies}. The
game $(R,C)$ is defined by two $n\times n$ \textit{payoff matrices},
$R$ for the row player, and $C$ for the column player. The pure
strategies for the row player are his rows and the pure strategies of
the column player are her columns. If the row player plays row $i$ and
the column player plays column $j$, the \textit{payoff} for the row
player is $R_{ij}$, and $C_{ij}$ for the column player.  For the row
player a \textit{mixed strategy} is a probability distribution $\vc{x}$
over the rows, and a mixed strategy for the column player is a
probability distribution $\vc{y}$ over the columns, where $\vc{x}$ and
$\vc{y}$ are column vectors and $(\vc{x},\vc{y})$ is a \textit{mixed
strategy profile}. The payoffs resulting from these mixed strategies
$\vc{x}$ and $\vc{y}$ are $\vc{x}^\T R\vc{y}$ for the row player and
$\vc{x}^\T C\vc{y}$ for the column player.

A \textit{\NE}\ is a pair of mixed strategies $(\vc{x}^*,\vc{y}^*)$
where neither player can get a higher payoff by playing another strategy
assuming the other player does not change his strategy. Because of the
linearity of a mixed strategy, the largest gain can be achieved by
defecting to a pure strategy. Let $\vc{e}_i$ be the vector with a 1 at
the $i$th position and a 0 at every other position. Thus a \NE\ $(\vc{x}^*,\vc{y}^*)$
satisfies
\[
\forall i = 1\cdots n ~~~ \vc{e}_i^\T R\vc{y}^*\leq(\vc{x}^*)^\T R\vc{y}^* ~~{\rm and}~~
(\vc{x}^*)^\T C\vc{e}_i\leq(\vc{x}^*)^\T C\vc{y}^*.
\]
We assume that the payoffs of $R$ and $C$ are between 0 and 1,
which can be achieved by rescaling.
An \textit{\ANE}\ (or, $\epsilon$-Nash equilibrium) is a strategy pair
$(\vc{x}^*,\vc{y}^*)$ such that each
player can gain at most $\epsilon$ by unilaterally deviating to a
different strategy.
Thus, it is $(\vc{x}^*,\vc{y}^*)$ satisfying
\[
\forall i = 1\cdots n ~~~ \vc{e}_i^\T R\vc{y}^*\leq(\vc{x}^*)^\T R\vc{y}^*+\epsilon ~~{\rm and}~~
(\vc{x}^*)^\T C\vc{e}_i\leq(\vc{x}^*)^\T C\vc{y}^*+\epsilon.
\]
We say that the {\em regret} of a player is the difference between his
payoff and the payoff of his best response.

The \textit{support} of a mixed strategy $\vc{x}$, denoted $\supp(\vc{x})$,
is the set of pure strategies that are played with non-zero probability by
$\vc{x}$.
An {\em approximate well-supported Nash equilibrium} strengthens
the requirements of a mixed Nash equilibrium. For a mixed strategy $\vc{y}$ of the
column player, a pure strategy $i \in [n]$ is an \emph{$\epsilon$-best response}
for the row player if, for all pure strategies $i' \in [n]$ we have:
$\vc{e}^T_iR\vc{y}\geq \vc{e}^T_{i'}R\vc{y}-\epsilon$.
We define $\epsilon$-best responses for the column player analogously.
A mixed strategy profile $(\vc{x}, \vc{y})$ is an
\emph{$\epsilon$-well-supported Nash equilibrium} ($\epsilon$-WSNE) if every
pure strategy in $\supp(\vc{x})$ is an $\epsilon$-best response against~$\vc{y}$, and
every pure strategy in $\supp(\vc{y})$ is an $\epsilon$-best response against~$\vc{x}$.

\begin{paragraph}{The communication model:}
Each player $\player\in\{r,c\}$ has an algorithm ${\cal A}_\player$ whose initial input
data is $\player$'s $n\times n$ payoff matrix. Communication proceeds in a number of rounds,
where in each round, each player may send a single bit of information to the
other player. During each round, each player may also carry out a
polynomial (in $n$) amount of computation.
(A natural variant of the model would omit the restriction to polynomial computation.
Indeed, our lower bounds on communication requirement do not depend on computational limits.)
At the end, each player $\player$ outputs a mixed
strategy $\vc{x}_\player$. We aim to design (pairs of) algorithms $({\cal A}_r,{\cal A}_c)$ that
output $\epsilon$-Nash strategy profiles $(\vc{x}_r,\vc{x}_c)$, and are economical with the
number of rounds of communication. This is similar to the {\em mixed
Nash equilibrium procedure} of~\cite{Hart2010}, here applied to approximate
rather than exact equilibria.

Notice that given $\Theta(n^2)$ rounds of communication, we can apply any
centralised algorithm ${\cal A}$ by getting (say) the row player to pass
additive approximations of all his payoffs to the
column player, who applies ${\cal A}$ and passes to the row player the mixed strategy
obtained by ${\cal A}$ for the row player.
(The quality of the $\epsilon$-Nash equilibrium is proportional to the quality of
of the additive approximations used.) For this reason we focus on algorithms
with many fewer rounds, and we obtain results for logarithmic or polylogarithmic
(in $n$) rounds.

We also consider a restriction to {\em one-way communication}, where one player may send
but not receive information.
\end{paragraph}

\subsection{Related Work}

We start by reviewing some algorithms that we adapt to the communication-bounded
setting. Then we review the background work on communication complexity,
and related work in computing Nash equilibria, including learning of
equilibria in {\em uncoupled} settings.

\subsubsection{Algorithms for approximate equilibria}\label{sec:approxalgos}

In recent years a number of algorithms~\cite{KPS,Daskalakis2009,DMP2,Bosse2007,Tsaknakis2007}
have been developed that
compute (in polynomial time) $\epsilon$-Nash equilibria for various values of $\epsilon$.
Of these, Tsaknakis and Spirakis~\cite{Tsaknakis2007} obtain the best (smallest)
value of $\epsilon$, of approximately $0\cdotp3393$.
The more demanding criterion of {\em well-supported} $\epsilon$-Nash equilibrium,
disallows a player from allocating positive probability to any pure strategy
whose payoff is more than $\epsilon$ worse than the best response.
Progress on polynomial-time algorithms for this
solution concept has been more limited; at this time the lowest $\epsilon$ that
can be guaranteed by a polynomial-time algorithm is only slightly less
than $\frac{2}{3}$~\cite{FGSS12}, obtained via a modification of a $\frac{2}{3}$-approximation
algorithm of Kontogiannis and Spirakis~\cite{KS10}.
Prior to that, \cite{Daskalakis2009} gave a $\frac{5}{6}$-approximation algorithm,
that is contingent on a graph-theoretic conjecture.
In this context, our $0\cdotp732$-approximation algorithm substantially
improves on the result of~\cite{Daskalakis2009}, both in terms of approximation
quality and a more demanding model (communication-bounded algorithms).
However, we do not know how to obtain the better approximation quality
of~\cite{KS10,FGSS12} in the communication-bounded setting.
Next we discuss two of the earlier algorithms in the literature whose ideas we use here.

\begin{paragraph}{DMP-algorithm:}
The DMP-algorithm~\cite{Daskalakis2009} works as follows to achieve a $\frac{1}{2}$\NNE.
The algorithm picks a arbitrary row for the row player, say row $i$.
Let $j\in \argmax_{j'} C_{ij'}$. Let $k\in \argmax_{k'} R_{k'j}$.
So $j$ is a pure-strategy best response for the
column player to row $i$ and $k$ is a best response strategy for the row
player to column $j$. The strategy pair $(\vc{x}^*,\vc{y}^*)$ will now
be $\vc{x}^*=\frac{1}{2}\vc{e}_i+\frac{1}{2}\vc{e}_k$ and
$\vc{y}^*=\vc{e}_j$. With this strategy pair the row player plays a best
response with probability $\frac{1}{2}$ to a pure strategy of the column
player and the column player has a pure strategy that is with
probability $\frac{1}{2}$ a best response.

The DMP-algorithm is well-adapted to the limited-communication setting.
Suppose the row player uses $i=1$ as his initial choice of row.
The column player needs to tell the row player her value of $j$, a
communication of $O(\log n)$ bits. No further communication is needed.
Notice moreover that the communication is all one-way; the row player
does not need to tell the column player anything.
\end{paragraph}

Subsequent algorithms for computing $\epsilon$-Nash equilibria cannot
so easily be adapted to a limited-communication setting, but we can use
some of the ideas they develop, to obtain values of $\epsilon$ below $\frac{1}{2}$
in this setting.

\begin{paragraph}{An algorithm of Bosse et al.~\cite{Bosse2007}:}
The algorithm presented in \cite{Bosse2007} can be seen as a
modification of the DMP-algorithm and achieves a $0\cdotp38197$\NNE. Instead of
a player allocating some probability to some arbitrary pure strategy,
the algorithm starts with the row player allocating some probability to
the row-player strategy $\vc{x}$ belonging to the \NE\ of the zero-sum game $(R-C,C-R)$.
In solving the zero-sum game efficiently we apply the connection of zero-sum games
with linear programming~\cite{Neumann28,Dantzig63,Karmarkar84}.
If the (mixed) strategy profile $(\vc{x},\vc{y})$ that constitutes
a \NE\ of $(R-C,C-R)$ gives a $0\cdotp38197$\NNE\ for $(R,C)$, this solution is used.
Otherwise, the column player plays a best response $\vc{e}_j$ to $\vc{x}$
and the row player plays a mixture of $\vc{x}$ and $\vc{e}_k$, where
$\vc{e}_k$ is a best response to the strategy $\vc{e}_j$ of the column
player. (\cite{Bosse2007} goes on to improve the worst-case performance
to a $0\cdotp36395$\NNE.)

Notice that this algorithm cannot be adapted in a straightforward way to
our communication-bounded setup, since it requires a computation using
knowledge of both matrices. The starting-point of our algorithms of Section~\ref{sec:commbounded}
is the players separately solving $(R,-R)$ and $(-C,C)$.
\end{paragraph}

\subsubsection{Communication Complexity}

The ``classical'' setting of \CC\ is based on the model introduced by
Yao in \cite{Yao1979}. We will follow the representation in
\cite{Kushilevitz1997}. We have two agents\footnote{We use agents
instead of players to avoid confusion, the communication does not have to
be between the players of the game.}, one holding an input
$\vc{x}\in\{0,1\}^n$ and the other holding an input
$\vc{y}\in\{0,1\}^n$. The objective is to compute $f(\vc{x},\vc{y})\in
\{0,1\}$, a joint function of their inputs. The computation of
$f(\vc{x},\vc{y})$ is done via a communication protocol $\mathcal{P}$.
During the execution of the protocol, the agents send messages to
each other. While the protocol has not terminated, the protocol
specifies what message the sender should send next, based on the input
of the protocol and the communication so far. If the protocol
terminates, it will output the value $f(\vc{x},\vc{y})$. A communication
protocol $\mathcal{P}$ computes $f$ if for every input pair
$(\vc{x},\vc{y})\in\{0,1\}^n\times\{0,1\}^n$, it terminates
with the value $f(\vc{x},\vc{y})$ as output.

The \CC\ of a communication protocol $\mathcal{P}$ for computing
$f(\vc{x},\vc{y})$ is the number of bits sent during the execution of 
$\mathcal{P}$, which we denote by
$CC(\mathcal{P},f,\vc{x},\vc{y})$. The \CC\ of a protocol $\mathcal{P}$
for a function $f$ is defined as the worst case \CC\ over all possible
inputs for $(\vc{x},\vc{y})\in\{0,1\}^n\times\{0,1\}^n$, which we denote
by $CC(\mathcal{P},f)$:
$$CC(\mathcal{P},f)=\max_{(\vc{x},\vc{y})\in\{0,1\}^n\times\{0,1\}^n}CC(
\mathcal{P},f,\vc{x},\vc{y})$$ The \CC\ of a function $f$ is the minimum
over all possible protocols:
$$CC(f)=\min_{\mathcal{P}}CC(\mathcal{P},f)$$

\subsubsection{Existing results on \CC\ of \NA}

There are a few results concerning the \CC\ of \NA.
Conitzer and Sandholm~\cite{Conitzer2004} show a lower bound on the \CC\ for
2-player games of finding a pure \NE\ of $\Omega(n^2)$, where $n$ is the
number of pure strategies for each player. They also show a simple
algorithm that finds a pure \NE\ (if it exists) in $O(n^2)$. They do not
extend their analysis to mixed \NA; their focus is on searching for a
pure \NE\ (if one exists), in contrast with the existence of a mixed \NE,
which is guaranteed~\cite{Nash1951}. For unrestricted bimatrix games, it can
be seen that the communication complexity of finding an exact equilibrium
is $\Omega(n^2)$\footnote{Consider a game where there is a unique, fully-mixed
Nash equilibrium. If the payoffs are perturbed slightly, the resulting
equilibrium, for (say) the row player, will be affected in a non-trivial
way by all the perturbations of the column player's payoffs. This immediately
results in the requirement of $\Omega(n^2)$ communication.}. That observation
leads to the question addressed here, of whether approximate equilibria have
lower communication complexity.

Also related to this paper,
Hart and Mansour~\cite{Hart2010} study the \CC\ of uncoupled equilibrium procedures,
(discussed in more detail below in Section~\ref{sec:uncoupled})
in the context of multiplayer, binary action games.
The emphasis is on lower bounds on the communication requirement.
Analogously to the $\Omega(n^2)$ communication needed for pure or mixed
\NE\ that we noted above, they obtain a lower bound of $\Omega(2^s)$
(where $s$ is the number of players) on
the communication needed to find an exact mixed equilibrium, or determine
the existence of a pure one. (Note that in their setting, each player has
a payoff matrix of size $2^s$, so that essentially all the payoffs may need
to be communicated.) On the other hand, they obtain a polynomial upper
bound on the communication required to find a {\em correlated equilibrium},
discussed further below.
Their methods do not seem to be applicable in an obvious way to
approximate equilibria. For example, the lower bound for computing a mixed
equilibrium involves a game whose solution requires probabilities having
exponentially large descriptions, which
would not be needed in the context of approximate equilibria.

\subsubsection{Uncoupled Learning of Equilibria}\label{sec:uncoupled}

An extensive literature studies {\em uncoupled} procedures for finding
game-theoretic solutions. The terminology ``uncoupled'' is introduced in~\cite{HMC2003};
it refers to settings where each player knows his own (but not the others')
utility function. Then, there is a sequence of rounds (a.k.a. time steps, or
periods), in which each player plays a strategy, and receives the payoff
resulting from the entire strategy profile. Our setting of communication
complexity is related to this, in that each player can use his choice of
action (in a round) to transmit information. The main difference is that
here, we do not assume a ``rational'' choice of action where a player tries
to maintain his payoff over time by predicting the choices of his opponents.
In our set-up, player communicate some information over a (hopefully short)
sequence of rounds, and afterwards promise to use certain mixed strategies.
Our interest is in both upper and lower bounds on the required length of
the sequence. As noted in Conitzer and Sandholm~\cite{Conitzer2004},
lower-bound type results generally ignore strategic considerations, which
perhaps helps to justify our own inattention to rationality in this paper.

In the context of uncoupled search for Nash equilibrium, Hart and Mas-Colell~\cite{HMC2003}
show that when players do not remember the history of play, it may be
impossible to reach Nash equilibrium. Note that the obstacle is informational
rather than due to rationality of the players.
A subsequent paper~\cite{HMC2006} analyses how much of the history of play
needs to be recalled by the players. In the case of mixed (approximate) Nash
equilibria, the approach is to test many probability distributions is a search
for one that constitutes an approximate equilibrium; a large number of rounds
is required to achieve this.
Foster and Young~\cite{Foster2006} show how this can be achieved in a ``radically
uncoupled'' setup, where a player does not directly observe the opponents'
behaviour, but observed it indirectly via the payoffs he obtains. Again, a
very large number of rounds are required to find an approximate equilibrium.
Daskalakis et al.~\cite{DFPPV} study negative results, namely failure to
converge to Nash equilibrium, for standard multiplicative weights
update algorithms, in the context of bimatrix games. Their results consider three
variants of uncoupled dynamics.

There are more natural learning algorithms that converge (in various senses)
to the weaker solution concept of correlated equilibrium (e.g. Foster and
Vohra~\cite{FV1997}, Hart and Mas-Colell~\cite{HMC2000}).
When we relax our objective from approximate Nash equilibrium to approximate
correlated equilibrium, then learning can take place with a sublinear
number of rounds, from a straightforward
application of no-regret learning algorithms. The idea is applied in
Theorem~30 of~\cite{Hart2010}. In particular, we equip each
player\footnote{Indeed, there may be any number of players, not just 2.} with
a no-regret algorithm, and suppose that at each round it duly selects (and outputs)
a pure strategy, which requires $\log(n)$ bits to output.
Indeed, Theorem~17 of~\cite{Hart2010} shows how {\em exact} correlated equilibrium
may be found in a polynomial number of rounds.

Foster and Young~\cite{Foster2006} point out as motivation for uncoupled learning rules, 
that uncoupledness prevents a learning rule from behaving like a centralised
algorithm and just constituting a theory of equilibrium selection.
In this paper we similarly avoid the possibility of implementing a centralised
algorithm, though restricting to a sublinear number of rounds of communication,
so that it is impossible for one player to reveal all (or even a large fraction)
of his payoffs to the other player.

\subsection{Overview of our results}

For general $n\times n$ games we show the following bounds on the obtainable quality
of an approximate \NE\ if we fix the amount of communication allowed.
We start by considering a version where no communication is allowed.
Theorem~\ref{thm:UBnocomm} gives a simple way to find a $\frac{3}{4}$-Nash
equilibrium, in this setting.
Theorem~\ref{thm:halfplus} identifies a corresponding lower bound of slightly
more than $\frac{1}{2}$.
For one-way communication we exhibit (Theorem~\ref{thm:oneway})
a lower bound of $\frac{1}{2}-o(\frac{1}{\sqrt{n}})$.
The DMP-algorithm can be implemented as an
algorithm with one-way communication and gives a $\frac{1}{2}$\NNE. Therefore the
constant $\frac{1}{2}$ in the
lower bound of Theorem~\ref{thm:oneway} is tight, in this context.
In Section~\ref{sec:0.438} we show how to compute a $0\cdotp438$-Nash equilibrium
using polylogarithmic communication. In Section~\ref{sec:conclusions} we
discuss the significance of the results, along with open problems.

\section{Approximate Nash Equilibria with no Communication}

The simplest way to restrict communication is to disallow it entirely.\footnote{This
is to some extent inspired by earlier work
of the first author~\cite{Goldberg2006} that studied an approach to
pattern classification in which the set of observations of each class must be
processed by an algorithm that proceeds independently of the corresponding
algorithms that receive members of the other classes.}
That means that for each player $\player\in\{r,c\}$,
we must find a function $f_\player$ from $\player$'s payoff matrix to a mixed strategy,
such that for all pairs of matrices $(R,C)$, we have that $(f_r(R),f_c(C))$
is an $\epsilon$-Nash equilibrium.
In this section we show that the achievable value of $\epsilon$ lies somewhere
between $0\cdotp501$ and $\frac{3}{4}$. The $\frac{3}{4}$ upper bound is achieved
via a simple algorithm (differing from the $\frac{3}{4}$-approximation algorithm
of~\cite{KPS}, in terms of the solution it finds).
Theorem~\ref{thm:halfplus} presents the lower bounds of $0\cdotp501$.

Theorem~\ref{thm:oneway} in Section~\ref{sec:oneway} furnishes a lower bound of
$\frac{1}{2}$, even when one-way communication is permitted, and has a simpler proof
(the proof is similar to Case 2 in the proof of Theorem~\ref{thm:halfplus}).
This raises the question: why bother to include a complicated proof
(specific to the communication-free setting) whose result
is only a small improvement (over the one-way communication setting)?
The reason is that we rule out the possibility that
$\frac{1}{2}$ is in fact the answer, and as we discuss in the conclusions
(Section~\ref{sec:conclusions}), $\frac{1}{2}$ seems to arise frequently as a
barrier to progress in the study of algorithms for approximate Nash equilibria, so
it is informative to rule out that possibility.
Our lower bound of $0\cdotp501$ could be increased slightly by tweaking the parameters
of the proof, but we believe that the resulting progress would be incremental.

\begin{theorem}\label{thm:UBnocomm}
It is possible to guarantee a $\frac{3}{4}$\NNE,
even if there is no communication between the players.
\end{theorem}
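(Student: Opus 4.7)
\emph{Plan.} I would exhibit a pair of oblivious algorithms---one for each player, depending only on that player's own payoff matrix---and verify that the resulting strategy profile is always a $\tfrac{3}{4}$-approximate Nash equilibrium.

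The algorithm is simple. The row player computes $i^{*}\in\argmax_i \sum_j R_{ij}$, a pure best response to a uniform column, and outputs the mixed strategy $\vc{x}^{*} = \tfrac{1}{2n}\vec{1} + \tfrac{1}{2}\vc{e}_{i^{*}}$ (half uniform, half this pure strategy). Analogously the column player outputs $\vc{y}^{*} = \tfrac{1}{2n}\vec{1} + \tfrac{1}{2}\vc{e}_{j^{*}}$ where $j^{*}\in\argmax_j \sum_i C_{ij}$. These outputs depend only on each player's own matrix, so no communication takes place and each player's computation is clearly polynomial-time.

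For the row player's regret I would expand
\[
(\vc{x}^{*})^{\T} R\,\vc{y}^{*} \;=\; \tfrac{1}{4n^{2}}\sum_{i,j}R_{ij} \;+\; \tfrac{1}{4n}\sum_i R_{ij^{*}} \;+\; \tfrac{1}{4n}\sum_j R_{i^{*}j} \;+\; \tfrac{1}{4}R_{i^{*}j^{*}},
\]
while the row player's best-response payoff against $\vc{y}^{*}$ is
\[
\max_i(R\vc{y}^{*})_i \;=\; \max_i\Bigl[\tfrac{1}{2n}\sum_j R_{ij} \;+\; \tfrac{1}{2}R_{ij^{*}}\Bigr] \;\leq\; \tfrac{1}{2n}\sum_j R_{i^{*}j} \;+\; \tfrac{1}{2}M_{j^{*}},
\]
where $M_{j^{*}}=\max_i R_{ij^{*}}$ and the inequality comes from splitting the max over the two summands (using that $i^{*}$ itself maximizes $\sum_j R_{ij}$). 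Subtracting the payoff expression from this upper bound and dropping the remaining non-negative terms leaves
\[
\text{regret}_{\text{row}} \;\leq\; \tfrac{1}{4n}\sum_j R_{i^{*}j} \;+\; \tfrac{1}{2}M_{j^{*}} \;\leq\; \tfrac{1}{4}+\tfrac{1}{2} \;=\; \tfrac{3}{4},
\]
using the entrywise bounds $\sum_j R_{i^{*}j}\leq n$ and $M_{j^{*}}\leq 1$. The column player's regret is bounded symmetrically.

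The key step---and the one I expect to absorb the work---is the split of the max over the two summands, which introduces exactly the slack needed so that the two entrywise bounds $\tfrac{1}{4}$ and $\tfrac{1}{2}$ sum to $\tfrac{3}{4}$. No case analysis on $R$ and $C$ is required, and the algorithm uses a different mixed-strategy solution than the one produced by the KPS $\tfrac{3}{4}$-approximation algorithm, matching the informal description preceding the theorem.
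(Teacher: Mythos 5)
Your proof is correct, and it takes essentially the same approach as the paper: each player splits probability evenly between a fixed, matrix-independent anchor strategy and a best response to the opponent's anchor, then the best-response property is used to cancel one term, leaving the familiar $\tfrac14+\tfrac12=\tfrac34$ bound. The only difference is the choice of anchor---you use the uniform distribution, the paper uses the first pure strategy---which changes the bookkeeping slightly (a sum of $n$ entries bounded by $n$ rather than a single entry bounded by $1$) but not the structure of the argument or the resulting bound.
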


\begin{proof}
Each player allocates probability $\frac{1}{2}$ to his first pure strategy, and $\frac{1}{2}$
to his best response to the other player's first pure strategy.
In detail, let $i\in \arg \max_{i'}R_{i'1}$ and let $j\in \arg \max_{j'}C_{1j'}$.
The approximate \NE\ will be $f_r(R)=\frac{1}{2}\vc{e}_1+\frac{1}{2}\vc{e}_i$ and
$f_c(C)=\frac{1}{2}\vc{e}_1+\frac{1}{2}\vc{e}_j$.

Let $i'$ be a best pure strategy response of the row player to $f_c(C)$. 
Then his incentive to deviate is
\[
\left(\frac{1}{2}R_{i'1}+\frac{1}{2}R_{i'j}\right)-
\left(\frac{1}{4}R_{11}+\frac{1}{4}R_{1j}+ \frac{1}{4}R_{i1}+\frac{1}{4}R_{ij}\right)
\]
\[
\leq
\left(\frac{1}{4}R_{i'1}+\frac{1}{2}R_{i'j}\right)-
\left(\frac{1}{4}R_{11}+\frac{1}{4}R_{1j}+\frac{1}{4}R_{ij}\right)
\leq
\frac{1}{4}R_{i'1}+\frac{1}{2}R_{i'j} \leq \frac{1}{4}+\frac{1}{2}=\frac{3}{4}
\]
where the first inequality holds because $i$ was a best response to column
1 (so $R_{i1}\geq R_{i'1}$) and the next inequalities hold because payoffs lie
in $[0,1]$. The same kind of argument holds for the column player. This proves the
theorem.
\end{proof}

The following lemma provides a construction that is used in
Theorems~\ref{thm:halfplus} and~\ref{thm:oneway}.

\begin{definition}\label{def:matrix}
Let $M_{n}$ be a matrix with $n$ columns and  ${n\choose k}$ rows, where
$k=\lfloor \sqrt{n}\rfloor$ and a row consists
of $k$ 1's and $(n-k)$ 0's. Every row is distinct, so the
${n\choose k}$ rows are all the possible sequences with $k$ 1's in a row of length $n$.
\end{definition}

\begin{lemma}\label{lem:key}
Suppose we have a bimatrix game where the row player's payoff matrix is $M_{n}$ (as in
Definition~\ref{def:matrix}). Let $\vc{x}$ be a mixed strategy for the row player.
Then, there exists a column of $M_{n}$ such that if the column player uses any
mixed strategy $\vc{y}$ that allocates probability $p$ to that column, then the row player's
regret is at least $p-O(1/\sqrt{n})$.
\end{lemma}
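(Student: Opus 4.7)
For the given row strategy $\vc{x}$ and each column $j$, write $q_j = \Pr_{r \sim \vc{x}}[j \in S_r]$ for the marginal probability that the $k$-subset $S_r$ indexing a row sampled from $\vc{x}$ contains column $j$. Since each row of $M_n$ has exactly $k$ ones, $\sum_j q_j = k$, so by averaging some column $j^*$ satisfies $q_{j^*} \leq k/n \leq 1/\sqrt{n}$. This is the distinguished column the lemma asks for.

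Now fix any column strategy $\vc{y}$ with $y_{j^*} = p$. The plan is to exhibit an auxiliary mixed row strategy $\vc{x}'$ (allowed to depend on $\vc{y}$) whose expected payoff against $\vc{y}$ exceeds that of $\vc{x}$ by at least $p - O(1/\sqrt{n})$; the regret is then at least this much, because the row player's best pure response against $\vc{y}$ does at least as well as any mixed strategy. I would construct $\vc{x}'$ by redirecting each row $S_r$ in $\supp(\vc{x})$ as follows: if $j^* \in S_r$, leave $S_r$ unchanged; otherwise remove the column $j_0(r) = \argmin_{j \in S_r} y_j$ from $S_r$ and replace it by $j^*$. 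The result is still a $k$-subset of $[n]$, hence a row of $M_n$, so $\vc{x}'$ is a valid mixed strategy.

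It remains to bound the payoff gain row by row. Unshifted rows ($j^* \in S_r$) contribute zero, while each shifted row contributes $y_{j^*} - y_{j_0(r)} = p - y_{j_0(r)}$. The key estimate is
\[
y_{j_0(r)} \;\leq\; \tfrac{1}{k}\sum_{j \in S_r} y_j \;\leq\; \tfrac{1}{k},
\]
which uses the fact that $\vc{y}$ is a probability distribution. Aggregating with weights $x_r$ gives
\[
(\vc{x}')^\T R\vc{y} - \vc{x}^\T R \vc{y} \;\geq\; (1 - q_{j^*})\bigl(p - \tfrac{1}{k}\bigr) \;\geq\; p - p\,q_{j^*} - \tfrac{1}{k} \;\geq\; p - \tfrac{1}{\sqrt{n}} - \tfrac{1}{k},
\]
and with $k = \lfloor \sqrt{n} \rfloor$ both correction terms are $O(1/\sqrt{n})$. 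The only substantive step is the shift construction together with the inequality $y_{j_0(r)} \leq 1/k$; everything else is routine averaging, and I do not foresee a serious obstacle.
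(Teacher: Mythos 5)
Your proof is correct and follows essentially the same route as the paper's: same averaging argument to find a column $j^*$ with $q_{j^*}\leq k/n$, and the same row-shifting construction (replace the $\vc y$-lightest column of each off-support row by $j^*$) to manufacture an improved response. The only cosmetic difference is that the paper sharpens $y_{j_0(r)}\leq 1/k$ to $(1-p)/k$ by noting that the columns of $S_r$ exclude $j^*$, but both bounds are $O(1/\sqrt n)$, so the conclusions agree.
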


\begin{proof}
The rows of $M_n$ contain 1's in a fraction $\frac{k}{n}$ of their entries.
By symmetry, so do the columns, thus every column contains 
$\frac{k}{n}\cdot{n\choose k}$ 1's and
$(1-\frac{k}{n})\cdot{n\choose k}$ 0's (recall $k=\lfloor\sqrt{n}\rfloor$).

${\vc x}$ assigns a probability to each row of $M_n$.
Define an unnormalised probability distribution $\Phi$ over the columns as follows.
Let $\Phi(j)$ be the probability
that a 1 will be in column $j$ of $M_n$, given a row sampled from $\vc{x}$.
Note that $\Phi(j)\leq 1$, with equality when every
row that is played with positive probability has a 1 in column $j$.
Because every row contains $k$ 1's, the sum of over all values $j$ will sum to $k$:
$\sum_{j=1}^{n} \Phi(j)=k$.

We define column $m$ to be one with a lowest value of $\Phi$: $m \in \argmin_j \Phi(j)$.
We choose $m$ to be the special column in the statement of the Lemma, and
we suppose that the column player allocates probability $p$ to $m$.

Since the sum over all values $\Phi(j)$ is $k$ and there are $n$ columns,
this means that $\Phi(m)\leq\frac{k}{n}$.
When column $m$ is played (and we assume it is played with probability $p$) 
it gives the row player a payoff of 0 with a probability of at least $1-\frac{k}{n}$.

We now consider the row player's strategy ${\vc x}$ and construct an improved
response $\vc{x}^*$ as follows.
$\vc{x}^*$ will differ from $\vc{x}$ in the following way. For every row $i$ we
see if its $m$-th entry is a 1. If this is the case,
we do not change anything. If instead its $m$-th entry is a 0, we do
the following: look at the entries where there is a 1 in row $i$.
Of all the entries where there is a 1, we select the one to which
the column player's distribution $\vc{y}$ gives the lowest probability, say entry $a$.
(i.e. choose column $a\in \arg\min_{j~:~M_n[i,j]=1} \vc{y}[j]$.)
Now we move all the probability allocated to row $i$ by $\vc{x}$,
to the row of $M_n$ that instead has a 0 in entry $a$ and a 1 in entry $m$,
and is otherwise the same as $i$.

The probability on entry $a$ is defined as the smallest among all the entries
where row $i$ has a 1. We can bound the probability that is allocated to
this entry by distribution $\vc{y}$. A probability at least $p$
is given to column $m$, so a probability of $1-p$ can be
distributed over the remaining columns. The column containing entry
$a$ has the smallest probability among at least $k$ columns, so the
probability given to column $a$ is at most $\frac{1-p}{k}$. 

The result of this construction of $\vc{x}^*$ from $\vc{x}$ is that every
row that is played with positive probability by $\vc{x}^*$ will have a 1 in the $m$-th entry.
There is a probability at least $(1-\frac{k}{n})$ that a row sampled from
$\vc{x}$ does not have a 1 in the $m$-th entry. This means that the increase
in payoff from replacing $\vc{x}$ with $\vc{x}^*$ is at least
\[
\left(1-\frac{k}{n}\right)\cdot p -\left(1-\frac{k}{n}\right)\cdot
\frac{1-p}{k} = \left(1-\frac{k}{n}\right)\cdot\left( p -\frac{1-p}{k}\right)
\geq p - \frac{1}{k}.
\]
Noting that $k=\lfloor \sqrt{n}\rfloor$ gives us the desired result.
\end{proof}

We use the following technical extension in the proof of Theorem~\ref{thm:halfplus}.
It is a straighforward corollary of Lemma~\ref{lem:key}.

\begin{corollary}\label{cor:key}
Suppose we have a bimatrix game where the row player's payoff matrix $R$ has $M_{n}$
(as in Definition~\ref{def:matrix}) as a submatrix. Suppose furthermore that all rows
of $R$ that do not intersect $M_n$ pay the row player zero, and for all rows of $R$
that intersect $M_n$, all columns that are not columns of $M_n$ pay the row player 1.

Let $\vc{x}$ be any mixed strategy for the row player that
allocates probability at least $p_r$ to rows that do not intersect $M_n$.
Let $\vc{y}$ be a mixed strategy for the column player, that allocates at least $p_c$ to columns that
do not intersect $M_n$, but allocates probability $p$ to some column $\ell$ intersecting $M_n$.
Then, there exists a choice of column $\ell$ such that the row player's
regret is at least $p - O(1/\sqrt{n}) + p_rp_c$.
\end{corollary}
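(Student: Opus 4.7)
The plan is to reduce the corollary to Lemma~\ref{lem:key} by exploiting the block structure of $R$, and to combine the resulting ``swap'' improvement with a separate move of the mass that $\vc{x}$ wastes on rows outside $M_n$. First, let $\bar{\vc{x}}$ be the restriction of $\vc{x}$ to rows intersecting $M_n$, renormalized to sum to one. Apply Lemma~\ref{lem:key} to $\bar{\vc{x}}$ on the submatrix $M_n$ to obtain its distinguished column $m$; this is the column $\ell$ named in the statement. Writing $q_c \geq p_c$ for the total $\vc{y}$-mass on columns outside $M_n$, we henceforth assume $\vc{y}(m) = p$.

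Next, I would construct a deviation $\vc{x}^\ast$ in two pieces. For each row $i$ of $M_n$ on which $\vc{x}$ places positive mass, apply exactly the swap from the proof of Lemma~\ref{lem:key}: leave $i$ alone if it has a $1$ in column $m$, and otherwise move $\vc{x}(i)$ to the row that replaces the $1$ of $i$ in its least-$\vc{y}$-probable column by a $0$ and flips its $m$-th entry to a $1$. For the mass $p_r$ that $\vc{x}$ places on rows outside $M_n$, move it all to one fixed row $i_0$ of $M_n$ that has a $1$ in column $m$. The payoff comparison then splits cleanly, since rows outside $M_n$ pay $0$ everywhere, rows inside $M_n$ pay $1$ against every column outside $M_n$, and the $M_n$-block contributes the submatrix payoff. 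I get
\[
\vc{x}^{\ast\T} R \vc{y} - \vc{x}^\T R \vc{y} \;\geq\; p_r q_c + (1-p_r)(A^\ast - A) + p_r B,
\]
where $A$, $A^\ast$ are the $M_n$-block payoffs of $\bar{\vc{x}}$ and its swap against $\vc{y}$ restricted to $M_n$-columns, and $B$ is the $M_n$-block payoff of row $i_0$. Since $i_0$ has a $1$ in column $m$, $B \geq p$; and the Lemma~\ref{lem:key} argument on the submatrix gives $A^\ast - A \geq p - O(1/\sqrt{n})$. Summing yields $p_r q_c + (1-p_r)p + p_r p - O(1/\sqrt{n}) = p + p_r q_c - O(1/\sqrt{n}) \geq p + p_r p_c - O(1/\sqrt{n})$, as desired.

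The one step that needs verification, and which I expect to be the main obstacle, is that Lemma~\ref{lem:key}'s analysis still works when $\vc{y}$ restricted to $M_n$-columns is only a subprobability distribution with total mass $1 - q_c$. The only place where the full normalization of $\vc{y}$ was used was in bounding the $\vc{y}$-probability of the swap-target column $a$ by the average of at least $k$ columns, giving $(1-p)/k$; here the same bound becomes $(1-q_c-p)/k \leq 1/k$, so the conclusion $A^\ast - A \geq (1-k/n)(p - 1/k) = p - O(1/\sqrt{n})$ carries through unchanged. The remaining steps are routine bookkeeping of the three contributions above.
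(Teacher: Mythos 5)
Your proposal is correct and follows essentially the same route as the paper: modify the rows intersecting $M_n$ via the swap of Lemma~\ref{lem:key}, transfer the mass on zero rows to a row of $M_n$ with a $1$ in column $\ell$, and account separately for the $p-O(1/\sqrt{n})$ gain inside the block and the $p_rp_c$ gain on columns outside $M_n$. Your explicit check that the lemma's bound on the swap-target column survives when $\vc{y}$ restricted to the $M_n$-columns is only a subprobability distribution is a detail the paper glosses over, and it holds exactly as you say.
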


\begin{proof}
Suppose $\vc{x}$ is modified as follows.
For rows that intersect $M_n$, modify their probabilities according to Lemma~\ref{lem:key}.
For other rows, set their probability to 0, and transfer their probability to
an arbitrary row that has payoff 1 when column $\ell$ is played.

This change increases by $p-O(1/\sqrt{k})$, the payoffs to the row player resulting
from the column player playing columns containing $M_n$. Note
that this gain is not conditioned on the column player playing columns intersecting
$M_n$; it is an absolute gain. In more detail, for rows intersecting $M_n$,
a fraction $1-\frac{k}{n}$ of them (w.r.t. probability measure $\vc{x}$)
have their payoff raised by at least $p-\frac{1-p}{k}$. For rows not intersecting
$M_n$, their payoffs are raised by at least $p$.

There is an additional increase to the row player's payoff due to the transfer of
probability from rows not intersecting $M_n$ to rows intersecting $M_n$, in the event
that the column player plays a column not containing $M_n$. In this case the
payoffs increase from 0 to 1, resulting in an additional payoff to the row
player of (at least) $p_r.p_c$.
\end{proof}

In the communication-free setting, each player $p$ computes a function
$f_p$ from his payoff matrix to a mixed strategy.
We will first introduce a ``commitment measure'' $\tau^p$ that measures the
variability of mixed strategies that may be selected by $p$, i.e.
the image of the set of all payoff matrices under $f_p$.

The variation distance between two probability distributions $\vc{x}$ and $\vc{x}'$
over $[n]$, is half the sum of all positive differences between the two distributions, i.e.
\[
d(\vc{x},\vc{x}')=\sum_{i=1}^n \frac{1}{2}\bigl|\vc{x}[i]-\vc{x}'[i]\bigr|.
\]
For $n\times n$ games, let $\Omega^r_n$ denote the set of strategies the row player may
use (i.e. the image of $f_r$) and $\Omega^c_n$ the set of strategies the column player may
use. For each player we define his ``centre strategy''. For the row player
the strategy $\vc{c}^r_n$ is the probability distribution such
that the maximum distance between $\vc{c}^r_n$ and any strategy
$\bm{\omega}\in\Omega^r_n$ is minimised.
\[
\vc{c}^r_n = \argmin_{\vc{c}} \sup_{\bm{\omega}\in\Omega^r_n} d(\vc{c},\bm{\omega})
\]
The centre distribution $\vc{c}^c_n$ of the column player is defined in a similar way.
The {\em commitment} $\tau^r_n$ of the row player is defined as
\[
\tau^r_n = 1-\sup_{\bm{\omega}\in\Omega^r_n} d(\vc{c}^r_n,\bm{\omega})
\]
The commitment $\tau^c_n$ of the column player is defined similarly.
This commitment measure $\tau^r_n$ will be a value in $[0,1]$ that indicates the variability
of strategies a player may use, and is high when the player always plays strategies
that are close to some ``central'' strategy.

\begin{theorem}\label{thm:halfplus}
For bimatrix games with payoffs in the range $[0,1]$,
if each player independently computes a mixed
strategy based on his own payoff matrix, then it is impossible to guarantee an
$\epsilon$-\NNE\ for $\epsilon<0\cdotp501$.
\end{theorem}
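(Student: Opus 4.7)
The plan is a case analysis on the commitment measures $\tau^r_N,\tau^c_N$ for a game size $N$ chosen large enough that the embedding of $M_n$ from Corollary~\ref{cor:key} fits as a submatrix (so $N\ge {n\choose k}$). The argument is driven by Corollary~\ref{cor:key}; the extra $p_rp_c$ term is what lifts the bound strictly above $\tfrac{1}{2}$.

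First I would treat the low-commitment case. Suppose one commitment, say $\tau^c_N$, is noticeably below $1$, so there exist column matrices $C_1,C_2$ with $d(f_c(C_1),f_c(C_2))\ge\delta$ for a constant $\delta$. Fix a row matrix $R$ that embeds $M_n$ as in Corollary~\ref{cor:key}, compute $\vc{x}=f_r(R)$, and identify the ``bad'' column $\ell^*$ from Lemma~\ref{lem:key}. A pigeonhole over $\{C_1,C_2\}$ (and, if necessary, over a symmetric copy of the construction for the column that does \emph{not} receive mass from the strategy in question) shows that at least one of $f_c(C_1),f_c(C_2)$ places probability bounded below by a constant on columns that hurt $\vc{x}$. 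The corresponding pair $(R,C_i)$ then gives row-player regret at least $\Omega(\delta)-O(1/\sqrt{n})$, which exceeds $\tfrac{1}{2}+\Theta(1)$ once $\delta$ is moderately chosen and $n$ is large. A symmetric argument using the transposed construction handles the case $\tau^r_N<1-\delta$.

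Second, if both commitments exceed $1-\delta$, then $f_r(R)$ and $f_c(C)$ lie within variation distance $\delta$ of fixed centre strategies $\vc{c}^r_N,\vc{c}^c_N$, essentially independent of $R$ and $C$. Now I would design a single worst-case game $(R,C)$: take $R$ of the form required by Corollary~\ref{cor:key} (embedded $M_n$, rows outside $M_n$ paying $0$, columns outside $M_n$ paying $1$ for rows intersecting $M_n$), adaptively choosing the \emph{position} of the $M_n$-embedding in terms of $\vc{c}^c_N$ so that $\vc{c}^c_N$ places a constant probability $p$ on some column inside $M_n$ and constant probability $p_c$ on columns outside $M_n$. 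Similarly $\vc{c}^r_N$ must place mass $p_r>0$ on rows outside $M_n$ (if it does not, place $M_n$ inside the rows of high mass and use a symmetric design). Then Corollary~\ref{cor:key} yields row-player regret at least $p-O(1/\sqrt n)+p_rp_c$, and the combined term strictly exceeds $\tfrac{1}{2}$ by a fixed constant for the chosen parameters.

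The hard part will be the second case: one must certify that $\vc{c}^r_N$ and $\vc{c}^c_N$ can be forced simultaneously to have constant mass both on the $M_n$ region and on its complement, because otherwise either the $p$ term or the $p_rp_c$ term collapses and we fall back to the mere $\tfrac{1}{2}$ obtainable via Lemma~\ref{lem:key} alone. This is handled by a pigeonhole choice of the embedding location together with a symmetric fallback construction when a centre strategy is too concentrated or too diffuse. The fine trade-off between $\delta$, $1/\sqrt n$, and the eventual $0.001$ margin above $\tfrac{1}{2}$ is exactly the parameter tuning that, as the authors note, can be sharpened at the cost of a more elaborate bookkeeping.
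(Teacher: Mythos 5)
Your high-level shape (a case split on commitment, Corollary~\ref{cor:key} as the engine, the $p_rp_c$ term as the source of the boost above $\tfrac{1}{2}$) matches the paper's Case~2, but the proposal has two genuine gaps. The first is the missing forcing mechanism for the column player. In both of your cases you need the column player to place probability close to $\tfrac{1}{2}$ on a column that is \emph{adversarially chosen after} $\vc{x}=f_r(R)$ is known (the bad column of Lemma~\ref{lem:key} depends on $\vc{x}$, hence on $R$). The paper achieves this with the family $C^1,\dots,C^n$, where $C^\ell$ is all ones in column $\ell$: to avoid regret above $0\cdotp501$ in her own game, the column player with matrix $C^{\ell}$ must put mass at least $0\cdotp499$ on column $\ell$, and $\ell$ can be set to the bad column. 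Your Case~1 replaces this with ``a pigeonhole over $\{C_1,C_2\}$,'' which does not work: two strategies at variation distance $\delta$ need not put any mass on the particular column identified by Lemma~\ref{lem:key}, and both could be supported entirely on columns where $\vc{x}$ already earns its maximum. Likewise in your Case~2, positioning the $M_n$ block so that the centre strategy $\vc{c}^c_N$ has ``constant'' mass $p$ inside it gives only regret about $p+p_rp_c$, and nothing forces $p$ anywhere near $\tfrac{1}{2}$ without the $C^\ell$ matrices.

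The second gap is the case decomposition itself. Your split is radius $\le\delta$ versus radius $>\delta$ for small $\delta$; the complementary case ``radius $>\delta$'' then covers essentially all strategy maps and is exactly the hard case, for which you have no valid argument. The paper splits at commitment $0\cdotp05$: when commitment is \emph{very} low (radius $\ge 0\cdotp95$) it extracts three nearly mutually singular column strategies $\vc{s}_1,\vc{s}_2,\vc{s}_3$ and builds a three-row matrix $R$ in which row $i$ pays at least $0\cdotp85$ against $\vc{s}_i$ and at most $0\cdotp1$ against the others; whichever row gets probability $\le\tfrac{1}{3}$ yields regret about $\tfrac{2}{3}(0\cdotp85)-0\cdotp05>0\cdotp501$. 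When commitment is $>0\cdotp05$ it uses the $C^\ell$ forcing together with Corollary~\ref{cor:key}, where the bound $d(\vc{x},\vc{c}^r)<0\cdotp95$ is what guarantees $p_r,p_c\ge 0\cdotp05-O(1/\sqrt{n})$. You would need both of these ingredients --- the three-strategy construction for the genuinely low-commitment regime and the all-ones-column matrices to pin the column player's mass --- to close the argument.
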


\begin{proof}
The proof will be a case analysis on commitment.
In the proof, our analysis is with respect to an
arbitrary fixed value of $n$, so we drop the subscript $n$ from
the commitment values $\tau^r_n$ and $\tau^c_n$, also the centre probability
vectors $\vc{c}^r_n$ and $\vc{c}^c_n$.
We will show that for all $n$, the regret of a player is at least $0\cdotp501$.
We identify two cases:
\begin{enumerate}\noitemsep
\item\label{case2} A player has a low commitment: $\tau^r\leq 0\cdotp05$ or $\tau^c\leq 0\cdotp05$
\item\label{case3} Neither player has a low commitment: $0\cdotp05<\tau^r$ and $0\cdotp05<\tau^c$
\end{enumerate}

\bigskip\noindent
{\bf Case~\ref{case2}: A player has low commitment}

\smallskip\noindent
Assume the column player has low commitment, thus $\tau^c\leq0\cdotp05$.
We use this low commitment to identify a set of strategies
that are quite far apart from each other, under variation distance.

For the column player, take an arbitrary strategy $\vc{s}_1 \in
\Omega^c$. Because $\tau^c\leq0\cdotp05$, there must be
some strategy $\vc{s}_2$ with $d(\vc{s}_1,\vc{s}_2)\geq 0\cdotp95$, otherwise
$\vc{s}_1$ could be the centre strategy $\vc{c}$ with $\tau^c\geq0\cdotp05$.

Now consider the strategy $\vc{s}_{12}=\frac{\vc{s}_1+\vc{s}_2}{2}$,
thus $d(\vc{s}_{12},\vc{s}_1)=d(\vc{s}_{12},\vc{s}_2)=\frac{1}{2}d(\vc{s}_1,\vc{s}_2)\leq\frac{1}{2}$.
For this strategy not to be a centre strategy $\vc{c}$ contradicting $\tau^c\leq 0\cdotp05$,
there must be some strategy $\vc{s}_3\in\Omega^c$ with
$d(\vc{s}_{12},\vc{s}_3)\geq0\cdotp95$.  Because $\vc{s}_1$ constitutes half of the
strategy $\vc{s}_{12}$, it holds that $d(\vc{s}_1,\vc{s}_3)\geq0\cdotp90$ and
similarly $d(\vc{s}_2,\vc{s}_3)\geq0\cdotp90$. We have
\[
d(\vc{s}_1,\vc{s}_2)    \geq 0\cdotp95;~~
d(\vc{s}_1,\vc{s}_3)    \geq 0\cdotp90;~~
d(\vc{s}_2,\vc{s}_3)    \geq 0\cdotp90;~~
d(\vc{s}_{12},\vc{s}_3) \geq 0\cdotp95.
\]
The next step is to construct a $n\times n$ payoff matrix $R$ of the row player.
Only the first 3 rows of $R$ will contain non-zero entries.
The construction of rows 1,2,3 will be such that for $i,j\in\{1,2,3\}$,
row $i$ is a best response to $s_i$ and a poor response to $s_j$ $(j\neq i)$.

For every column $j$ of $R$ determine the maximum of $\vc{s}_1[j]$,
$\vc{s}_2[j]$ and $\vc{s}_3[j]$. If $\vc{s}_1[j]$ is the largest,
$R_{1j}=1$ and  $R_{2j} = R_{3j}=0$. If $\vc{s}_2[j]$ is the largest,
$R_{2j}=1$ and  $R_{1j} = R_{3j}=0$. If $\vc{s}_3[j]$ is the largest,
$R_{3j}=1$ and  $R_{1j} = R_{2j}=0$. In case of a tie in the
comparison of $\vc{s}_1[j]$, $\vc{s}_2[j]$ and $\vc{s}_3[j]$, all the
entries corresponding to the tie get a 1.

Consider columns $i$ for which $R_{2i}=1$, so that $\vc{s}_2[i]>\vc{s}_1[i]$.
The total probability assigned by $\vc{s}_1$ to these columns is bounded by $0\cdotp05$.
If the probability on these columns was higher than $0\cdotp05$,
it would follow that $d(\vc{s}_1,\vc{s}_2)<0\cdotp95$.
Similarly we can bound the probability assigned by $\vc{s}_1$ to columns $i$
such that $R_{3i}=1$.
Since $d(\vc{s}_1,\vc{s}_3)\geq0\cdotp9$ this probability at most $0\cdotp1$.
From these observations, we have that at most $0\cdotp15$ of the probability
distribution $\vc{s}_1$ is allocated to columns that could give a payoff of 0 for row 1.
Since each column of $R$ contains at least one 1,
the remaining $0\cdotp85$ probability of $\vc{s}_1$ will be allocated to columns
that have a 1 in the corresponding entry of row 1.
The payoff for row 1 if
the column player plays $\vc{s}_1$ is therefore at least $0\cdotp85$. We can
use a similar argument to claim that when the column player plays
$\vc{s}_2$, the row player can get a payoff of at least $0\cdotp85$ by playing
pure strategy row 2, and at most $0\cdotp05$ for row 2, and at most $0\cdotp1$ for row 3.

For row 3 we use $d(\vc{s}_{12},\vc{s}_3)\geq0\cdotp95$.
Consider columns $i$ for which $R_{3i}=0$, so that either $R_{1i}=1$ or $R_{2i}=1$.
A column $i$ having this property, contributes $\geq\frac{1}{2}\vc{s}_3[i]$ to the
overlap between $\vc{s}_3$ and $\vc{s}_{12}$.
Indeed, if both $R_{1i}=1$ and $R_{2i}=1$, it contributes $\vc{s}_3[i]$ to the overlap.
So we can deduce that with respect to columns selected using $\vc{s}_3$,
$\Pr({\rm row}~1~{\rm pays}~1) + \Pr({\rm row}~2~{\rm pays}~1) \leq 0\cdotp1$.
Again, since each column of $R$ contains at least one 1,
the remaining $0\cdotp9$ probability of $\vc{s}_3$ will be allocated to columns
that have a 1 in the corresponding entry of row 3.
The payoff for row 3 if the column player plays $\vc{s}_3$
is therefore at least $0\cdotp9$, while the payoffs to rows 1 and 2 sum to at most $0\cdotp1$.
To summarise:
\begin{itemize}\noitemsep
\item If the column player plays $\vc{s}_1$, the row player gets a payoff
of at least $0\cdotp85$ by playing row 1. Playing row 2 would give him a payoff
of at most $0\cdotp05$ and playing row 3 a payoff of at most $0\cdotp1$. 
\item If the column player plays $\vc{s}_2$, the row player gets a payoff
of at least $0\cdotp85$ by playing row 2. Playing row 1 would give him a payoff
of at most $0\cdotp05$ and playing row 3 a payoff of at most $0\cdotp1$.
\item If the column player plays $\vc{s}_3$, the row player gets a payoff
of at least $0\cdotp9$ by playing row 3. Playing row 2 would give him a payoff
of at most $0\cdotp1$ and playing row 3 a payoff of at most $0\cdotp1$.
Moreover, the sum of payoffs of row 1 and row 2 is at most $0\cdotp1$.
\end{itemize}
Given the row player's strategy, let $(r_1,r_2,r_3)$ be the probabilities with
which he plays rows 1,2,3. Assume $r_1\leq
r_2,r_3$, so $r_1\leq\frac{1}{3}$ and suppose the column player plays strategy $\vc{s}_1$.
The best response strategy $(1,0,0)$ has a payoff of
$a\in[0\cdotp85,1]$. Because row 1 clearly gives the highest payoff, the
regret is minimised when this row is played with as much probability as
possible, so $r_1=\frac{1}{3}$. Because the probability on row 1 was defined
as the lowest probability, the probability on the other two rows is also
$\frac{1}{3}$. This gives a regret of at least
\[
   a-\left(\frac{1}{3}a+\frac{1}{3}(0\cdotp05)+\frac{1}{3}(0\cdotp1)\right) =\frac{2}{3}a-0\cdotp05
   \geq\frac{2}{3}(0\cdotp85)-0\cdotp05 \approx0\cdotp517
\]
The analysis for $r_2\leq r_1,r_3$ where the column player plays
$\vc{s}_2$ is similar.

Assume $r_3\leq r_1,r_2$ and the column player plays $\vc{s}_3$. The
best response to $\vc{s}_3$ has a payoff of at least $0\cdotp9$ and row 1 and 2
combined can have a payoff of at most $0\cdotp1$. This gives a regret of at least
\[
   a-\left(\frac{1}{3}a+\frac{1}{3}(0\cdotp1)\right) = \frac{2}{3}a-\frac{1}{30}
   \geq\frac{2}{3}(0\cdotp9)-\frac{1}{30} \approx0\cdotp567
\]
So regardless the strategy of the row player, the regret of the row
player is always larger than $0\cdotp501$ when the commitment of the other
player is at most $0\cdotp05$.

\bigskip\noindent
{\bf Case~\ref{case3}: Neither player has low commitment}

\smallskip\noindent
Suppose both players have commitment $\tau^r,\tau^c\geq 0\cdotp05$.
Consider the following set of payoff matrices for the column player:
$C^1,\ldots,C^n$ where $C^\ell$ has a payoff of 1 for every entry in the
$\ell$-th column and a 0 elsewhere:
\[
\forall i,j:\quad C_{ij}^\ell = 1 {\rm ~if~} j=\ell;~ 0 {\rm ~otherwise}
\]
To achieve a $0\cdotp501$\NNE, when the column player has payoff matrix $C^\ell$,
the column player should assign at least $0\cdotp499$ to column $\ell$.

The construction of the payoff matrix $R$ of the row player will depend
on the centre strategy $\vc{c}^r$ of the row player.
Take the $(n-\sqrt{n})$ rows of $R$ which have the highest values $\vc{c}^r[i]$,
where $\vc{c}^r[i]$ is the $i$-th entry of $\vc{c}^r$.
We construct matrix $R$ for which these rows are all zero.
For the construction of the remaining $\sqrt{n}$ rows of $R$ we consider
$\vc{c}^c$, the centre distribution of the column player.
We select the $(n-\sqrt{n})$ columns $j$
of $R$ having the highest values $\vc{c}^c[j]$. If row $i$ is one
of the rows with one of the $\sqrt{n}$ smallest entries for
$\vc{c}^r$ and column $j$ is a column with one of the
$(n-\sqrt{n})$ highest entries for $\vc{c}^c$, then we set $R_{ij}=1$.
The payoff entries in $R$ that are still undefined can be seen as a
$(\sqrt{n}\times\sqrt{n})$-sub-matrix.

This submatrix will contain a submatrix $M_{n'}$ as in Definition~\ref{def:matrix},
where (for $k'=\lfloor\sqrt{n'}\rfloor$) $\binom{n'}{k'} = \sqrt{n}$.
The extra columns of the submatrix have all their payoffs set to 1.
The entire matrix $R$ now satisfies the conditions of Corollary~\ref{cor:key}.
Let $S$ be the set of non-zero rows;
by construction $\sum_{i\in S}\vc{c}^r[i]\leq\frac{1}{\sqrt{n}}$.
Since $d(\vc{x},\vc{c}^r)<0\cdotp95$, we have $\sum_{i\in S}\vc{x}[i]<0\cdotp05+\frac{1}{\sqrt{n}}$,
so $\sum_{i\not\in S}\vc{x}[i]>0\cdotp05-\frac{1}{\sqrt{n}}$.
Similarly $\vc{y}$ has measure $>0\cdotp05-\frac{1}{\sqrt{n}}$ on columns
not intersecting $M_{n'}$.

The values of $p_r$ and $p_c$ in Corollary~\ref{cor:key} are $0\cdotp05-\frac{1}{\sqrt{n}}$,
and the value of $p$ is $0\cdotp499$, so we get a regret of at least
$0\cdotp499-O(\frac{1}{\sqrt{n}})+(0\cdotp05-O(\frac{1}{\sqrt{n}}))^2$
$=0\cdotp5015-O(\frac{1}{\sqrt{n}})$.
\end{proof}

\section{One-way Communication}\label{sec:oneway}

We noted in Section~\ref{sec:approxalgos} that $\epsilon=\frac{1}{2}$ can be
achieved with one-way communication,
by a simple implementation of the DMP-algorithm, using logarithmic communication.
The following result gives a matching lower bound of $\frac{1}{2}$.
It thus also furnishes a slightly simpler lower-bound result for the communication-free
setting of the previous section, but of course the lower bound itself is necessarily weaker.

\begin{theorem}\label{thm:oneway}
It is impossible to guarantee to find an $\epsilon$-Nash equilibrium,
for any constant $\epsilon<\frac{1}{2}$, with unlimited one-way communication.
\end{theorem}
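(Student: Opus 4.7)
The plan is to reduce to a case where the sender's mixed strategy is completely determined by the sender's own payoff matrix, and then invoke Lemma~\ref{lem:key} to exhibit a family of bad instances. Without loss of generality I will suppose the one-way communication goes from the row player to the column player; the opposite direction is handled by a symmetric construction with the roles of rows and columns swapped. Under this convention the row player's output $\vc{x}$ is a function of $R$ alone (any internal randomness may be absorbed into $\vc{x}$), while the column player's output $\vc{y}$ may depend on $C$ together with the message sent by the row player.

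I will fix $R=M_n$, with $M_n$ as in Definition~\ref{def:matrix}. This single choice pins down both $\vc{x}$ and the message that is transmitted to the column player, no matter how many bits that message contains. For each $m\in[n]$, let $C^m$ be the column-player matrix whose entries satisfy $C^m_{ij}=1$ exactly when $j=m$ and $0$ otherwise. Against $C^m$, any column strategy $\vc{y}$ yields payoff $y_m$ while the pure best response pays $1$, so bringing the column player's regret below $\epsilon$ forces $y_m>1-\epsilon$ in the instance $(M_n,C^m)$.

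Now I apply Lemma~\ref{lem:key} to the fixed strategy $\vc{x}$ to obtain a specific column $m^{\ast}$ of $M_n$ with the property that whenever the column player places probability $p$ on $m^{\ast}$, the row player's regret is at least $p-O(1/\sqrt{n})$. Instantiating the game as $(M_n,C^{m^{\ast}})$ simultaneously forces $y_{m^{\ast}}>1-\epsilon$ (to keep the column player's regret at most $\epsilon$) and therefore makes the row player's regret at least $(1-\epsilon)-O(1/\sqrt{n})$. An $\epsilon$-Nash equilibrium requires both of these to be bounded by $\epsilon$, giving $\epsilon\geq \tfrac{1}{2}-O(1/\sqrt{n})$, which contradicts any constant $\epsilon<\tfrac{1}{2}$ once $n$ is sufficiently large. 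The only substantive ingredient is Lemma~\ref{lem:key}, which is already available; the rest just unpacks the fact that unlimited one-way messages from the row player cannot influence $\vc{x}$, while the one-way restriction prevents the column player from communicating anything back. The main thing to be a little careful about is the symmetric case (sender $=$ column player), where I will instead fix $C$ to be the transpose-style analogue of $M_n$ and let the row player's matrix range over the indicator matrices $R^{\ell}$ with $R^{\ell}_{ij}=1$ iff $i=\ell$; the same quantitative argument then produces the matching bound.
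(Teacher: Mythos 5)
Your proposal is correct and follows essentially the same argument as the paper's own proof: fix $R=M_n$, use the indicator matrices $C^\ell$ to force the column player to concentrate mass on a single column, and invoke Lemma~\ref{lem:key} to derive a regret of $\tfrac{1}{2}-O(1/\sqrt{n})$ for the row player. Your bookkeeping ($y_{m^\ast}\geq 1-\epsilon$ directly, rather than the paper's parameter $z$) and the explicit treatment of the reverse communication direction are minor cosmetic differences, not a different route.
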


\begin{proof}
We consider games $G=(R,C)$, where $R$ and $C$ are payoff matrices with
dimensions $\binom{n}{k}\times n$, with $k\approx\sqrt{n}$. Consider the following
set of column player payoff matrices $C^1,\ldots,C^n$, where $C^\ell$ has a
payoff of 1 for every entry in the $\ell$-th column and a 0 elsewhere:
\[
\forall i,j:\quad C_{ij}^\ell = 1 {\rm ~if~} j=\ell;~ 0 {\rm ~otherwise}
\]
The row player has matrix $R=M_n$ with $M_n$ as in Definition~\ref{def:matrix}.

Let $\vc{x}$ be the strategy of the row player, resulting from matrix $R$.
Let $\vc{y}_\ell$ be the strategy of the column player resulting from matrices
$R$ and $C^\ell$; note that with unlimited one-way communication we can
assume that the row player communicates all of $R$ (and indeed, $\vc{x}$)
to the column player.

We will show that for this class of games, one cannot do better than a
$(\frac{1}{2}-o(\frac{1}{\sqrt{n}}))$\NNE.

We search for a lower bound of $\frac{1}{2}-z$,
and we identify that a value of $z$ of $\frac{1}{\sqrt{n}}$ applies.

First observe that a best response for the column player having matrix $C^\ell$ is
$\vc{e}_\ell$, the pure strategy of column $\ell$. Column $\ell$ has payoff 1 and other
columns have payoff 0. So to reach a $(\frac{1}{2}-z)$\NNE,
$\vc{y}_\ell$ must allocate a probability at least $(\frac{1}{2}+z)$ to column $\ell$.

So, an arbitrary column $\ell$ can be required to have probability at least
$\frac{1}{2}-z$. Lemma~\ref{lem:key} says that the row player's regret is at least
$\frac{1}{2}-z-O(\frac{1}{\sqrt{n}})$.
Put $z=\frac{1}{\sqrt{n}}$ and we find that for all $\vc{x}$, $\ell$ may be chosen
such that in order for the column player to have regret less than $\frac{1}{2}-O(\frac{1}{\sqrt{n}})$,
the row player must have regret at least $\frac{1}{2}-O(\frac{1}{\sqrt{n}})$.
\end{proof}

\begin{theorem}
It is impossible to guarantee to find an $\epsilon$-well-supported Nash equilibrium,
for any constant $\epsilon<1$, with unlimited one-way communication.
\end{theorem}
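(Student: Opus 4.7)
The plan is to reuse the game family from the proof of Theorem~\ref{thm:oneway}, with the same one-way direction (row player sends to column player). Fix $R=M_n$ for the row player, and for each $\ell\in[n]$ let the column player's matrix be $C^\ell$, which is $1$ in every entry of column $\ell$ and $0$ elsewhere. Because communication flows only from row to column, the row player's output $\vc{x}$ is a function of $R$ alone, hence the \emph{same} mixed strategy in every game of the family; only $\vc{y}_\ell$ may depend on $\ell$.

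The first step is to pin down $\vc{y}_\ell$ exactly. Against any $\vc{x}$, in $C^\ell$ the column player earns $1$ from column $\ell$ and $0$ from every other column; the gap is $1$, so for any $\epsilon<1$ the only $\epsilon$-best response is column $\ell$ itself. The well-supported condition therefore forces $\supp(\vc{y}_\ell)=\{\ell\}$, i.e.\ $\vc{y}_\ell=\vc{e}_\ell$.

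The second step applies the WSNE condition on the row player's side. Against $\vc{e}_\ell$, row $i$ pays the row player $R_{i\ell}\in\{0,1\}$, while the best response pays $1$ (every column of $M_n$ contains at least one $1$, since $k=\lfloor\sqrt{n}\rfloor\geq 1$). Any row $i\in\supp(\vc{x})$ with $R_{i\ell}=0$ would suffer regret exactly $1>\epsilon$, so $\supp(\vc{x})\subseteq\{i:R_{i\ell}=1\}$. Since $\vc{x}$ is independent of $\ell$, this inclusion must hold simultaneously for every $\ell\in[n]$: every row in $\supp(\vc{x})$ must carry a $1$ in \emph{every} column. But by construction of $M_n$ each row has exactly $k<n$ ones, so no such row exists and $\supp(\vc{x})=\emptyset$, contradicting the fact that $\vc{x}$ is a probability distribution.

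There is really no main obstacle here: the combinatorial rigidity of the WSNE condition does all the work. Once one observes that $C^\ell$ pins the column player to the pure strategy $\vc{e}_\ell$ whenever $\epsilon<1$, the inability of a single $\vc{x}$ to be simultaneously well-supported against all $n$ of the $\vc{e}_\ell$ follows immediately from the sparsity of the rows of $M_n$. The Lemma~\ref{lem:key} machinery used in the $\epsilon$-NE lower bound is not needed, because we no longer have to quantify regret carefully — any nonzero regret on a supported pure strategy against $\vc{e}_\ell$ equals $1$.
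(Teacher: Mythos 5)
Your proof is correct, and the key observation is the same as the paper's: when $\epsilon<1$, the well-supported condition forces the column player to put all probability on column $\ell$ given matrix $C^\ell$, and then the row player's (necessarily $\ell$-independent) strategy must be supported only on rows that have a $1$ in column $\ell$, simultaneously for every $\ell$.

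Where you diverge from the paper is in the choice of row-player matrix. The paper uses the smallest possible instance: a $2\times 2$ game with $R$ the identity matrix and just the two column matrices $C^1,C^2$. If the row player puts positive probability on row $i$, then against $C^{3-i}$ the column player is pinned to $\vc{e}_{3-i}$, against which row $i$ pays $0$ while the other row pays $1$, violating the WSNE condition. You instead reuse the large $\binom{n}{k}\times n$ family $R=M_n$ with $C^1,\dots,C^n$, and reach the same contradiction from the fact that each row of $M_n$ has only $k<n$ ones. Both arguments are valid; yours shows the $M_n$ construction from Theorem~\ref{thm:oneway} is flexible enough to cover this case, but as you yourself observe, none of the quantitative machinery (Lemma~\ref{lem:key}, the $O(1/\sqrt n)$ bookkeeping) is needed once WSNE makes all relevant regrets exactly $0$ or $1$. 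In fact specialising your argument to $n=2$ and $k=1$ gives $M_2$, which is a row-permutation of the identity, i.e.\ essentially the paper's $2\times 2$ example. The paper's choice is preferable only because it exposes that the obstruction is already present at the smallest nontrivial game size and requires no asymptotics.

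One small point you leave implicit (as the paper also does): both proofs treat the case where the row player is the sender. Since ``unlimited one-way communication'' allows either direction, strictly speaking one should note that the argument transposes symmetrically when the column player is the sender.
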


\begin{proof}
To prove this theorem we will only have to look at $2\times 2$ games.
The row player has the identity matrix and the column player has one of two different column matrices. Communication is only allowed from the row player to the column player.
\[
	R   = \left( \begin{array}{cc}1&0\\0&1\\ \end{array}\right)\quad \text{and}\quad
	C^1 = \left( \begin{array}{cc}1&0\\1&0\\ \end{array}\right)\quad \text{or} \quad 
	C^2 = \left( \begin{array}{cc}0&1\\0&1\\ \end{array}\right)
\]
In any $\epsilon$-\WSNE\ for $\epsilon<1$, the column player must play pure
strategy column $j$, given payoff matrix $C^j$. That is necessary regardless of the
information she receives from the row player.

No communication is allowed from the column player to the row player, so the row
player's strategy is determined by matrix $R$. Let $f^r(R)$ be the row player's
strategy. If $f^r(R)$ allocates positive probability to row $i$, then we fail
to have a $\epsilon$-\WSNE\ (for any $\epsilon<1$) when the column player has
matrix $C^{3-i}$, since when that happens, row $i$ pays the row player $0$
while the other row pays $1$.
\end{proof}

\section{Communication-bounded algorithms}\label{sec:commbounded}

This section present the main positive results, algorithms that compute
approximate Nash equilibria that are limited to polylogarithmic communication.
Section~\ref{sec:epsilonnash} gives the main result for $\epsilon$-Nash
equilibria, and Section~\ref{sec:wsne} gives a variation of the algorithm
that compute $\epsilon$-\WSNE\ for $\epsilon\approx0\cdotp732$.

\subsection{A $0\cdotp438$\NNE\ procedure with limited communication}\label{sec:0.438}
\label{sec:epsilonnash}

This section provides a $0\cdotp438$\NNE\ procedure where the amount of communication
between the players is polylogarithmic in $n$.
We present the algorithm as an $\alpha$\NNE\ procedure first and then optimize $\alpha$.
At various points the algorithm uses the operation of communicating a mixed strategy
(a probability distribution over $[n]$) from one player to the other;
the details of this operation are
given in Section~\ref{sec:cms}. The general idea is to communicate a sample of
size $O(\log n)$ from the distribution and argue that the corresponding
empirical distribution is a good enough estimate for our purposes.

First the row player finds a \NE\ for the zero-sum game $(R,-R)$ and
the column player computes a \NE\ for the zero-sum game $(-C,C)$. Since
both games are zero-sum, we know that the payoff values for their
Nash equilibria will be unique. Both players compare this payoff value with
$\alpha$. We distinguish two cases,
\begin{enumerate}\noitemsep
\item neither player can ensure himself a payoff more than $\alpha$, or
\item at least one of the players can ensure a payoff more than $\alpha$.
\end{enumerate}
With $O(1)$ communication, the case that holds can be identified.

\bigskip\noindent
{\bf Case~1: the value of both zero-sum games is $\leq\alpha$ to each player}

\smallskip\noindent
The row player finds a strategy pair $(\vc{x}^*_r, \vc{y}^*_r)$ as solution to
$(R,-R)$, while the column player finds a strategy pair $(\vc{x}^*_c, \vc{y}^*_c)$
as solution to $(-C,C)$. The row player communicates $\vc{y}^*_r$ to the column
player (as described in Section~\ref{sec:cms}) and the
column player sends $\vc{x}^*_c$ to the row player.
They now play the game $(R,C)$ using strategy pair $(\vc{x}^*_c, \vc{y}^*_r)$.
Since $\vc{y}^*_r$ is a \NE\ strategy
in the zero-sum game $(R,-R)$ and the row player still plays with payoff
matrix $R$, by the minimax theorem, the row player has no strategy that
can give him a payoff of $\alpha$ or higher. The row player has a best
response with a value of at most $\alpha$, so his regret is also at most $\alpha$.
The strategy $\vc{x}^*_c$ was a \NE\ strategy in the zero-sum game
$(-C,C)$ and the column player still has payoff matrix $C$. So we can
use the same argument for the column player to claim that when the row
player plays strategy $\vc{x}^*_c$, the column player has regret at most
$\alpha$. So, we have a $\alpha$\NNE.
This concludes Case 1.

\bigskip\noindent
{\bf Case~2: one or both players can guarantee a payoff $>\alpha$}

\smallskip\noindent
If at least one of the players has a value of more than $\alpha$ for his
zero-sum game, he can get a payoff of more than $\alpha$ if he plays this
strategy, regardless the strategy of the other player. Assume w.l.o.g.\
that it is the row player who has a payoff greater than $\alpha$ in his
zero-sum game. He communicates this strategy $\vc{x}^*_r$ to
the column player (again, as described in Section~\ref{sec:cms}).
The column player identifies a pure strategy best
response $\vc{e}_j$ to $\vc{x}^*_r$ and communicates
$\vc{e}_j$ to the row player (using $\log n$ bits).

At this point in the algorithm we have the strategy pair
$(\vc{x}^*_r,\vc{e}_j)$. The column player has a best response
strategy, so at this point his regret is 0. The row player's strategy
$\vc{x}^*_r$ is paying him more than $\alpha$.
Let $\beta\leq 1$ be the value of his best response to $\vc{e}_j$.
So at this point the row player has a regret of at most  $\beta-\alpha$.
We next deal with the possibility that $\beta-\alpha>\alpha$.

At this stage the column player has regret 0 while we are only looking
for regret to be bounded by $\alpha$; meanwhile the row player has a strategy that might not
be good enough for an $\alpha$\NNE. To change this, we use a method
used in~\cite{Chen2009} (Lemma 3.2), which allows the row player to shift some of
his probability to his best response to $\vc{e}_j$. By shifting some of
his probability, it could be that $\vc{e}_j$ no longer is a best
response strategy for the column player. This is acceptable, as long as the
column player's regret while playing $\vc{e}_j$ is at most $\alpha$.
Suppose the row player shifts $\frac{1}{2}\alpha$ of his probability to a best
response strategy.
The payoff the column player gets with $\vc{e}_j$ could be
$\frac{1}{2}\alpha$ lower because of this move. The payoff of some other
column(s) could go as much as $\frac{1}{2}\alpha$ higher because of this
shift. The strategy $\vc{e}_j$ had regret 0, so by the shift of
$\frac{1}{2}\alpha$ of the row player's probability, the regret of the
column player is at most $\frac{1}{2}\alpha+\frac{1}{2}\alpha=\alpha$,
which constitutes an $\alpha$\NNE, for the column player.

The row player is allowed to change the allocation of $\frac{1}{2}\alpha$
of his probability that was allocated to strategies having the lowest payoff.
The remainder of his
probability, $1-\frac{1}{2}\alpha$, had already at least an average payoff of $\alpha$.
The probability is shifted to his best response with a value of $\beta$,
with $\alpha\leq\beta\leq1$. The following inequality is a sufficient
condition for the row player's regret to be at most $\alpha$:
\[
\left(1-\frac{1}{2}\alpha\right)\alpha+\frac{1}{2}\alpha\beta\geq\beta-\alpha \ , \quad 0\leq\alpha\leq\beta\leq1
\]

The solutions to this inequality are
\[
\begin{tabular}{rl}
$0<\alpha\leq\frac{1}{2}(5-\sqrt{17})$ & $\quad \alpha\leq \beta\leq\frac{\alpha^2-4\alpha}{\alpha-2}$ \\
$\frac{1}{2}(5-\sqrt{17})<\alpha<1$ &    $\quad \alpha\leq \beta\leq1$ \\
$\alpha=0\quad \beta=0$ &  $\quad \alpha=1\quad \beta=1$
\end{tabular}
\]
where it holds that if $\alpha=\frac{1}{2}(5-\sqrt{17})$ then
$f(\alpha)=\frac{\alpha^2-4\alpha}{\alpha-2}=1$ and for $0\leq\alpha\leq1$
this function is monotone increasing. This procedure will give an
$\alpha$\NNE, so $\alpha$ should be as low as possible.
Next to this it should also hold for every $\beta$ with $\alpha\leq\beta\leq1$.
The lowest $\alpha$ such that this condition hold is when $f(\alpha)=1$,
thus $\alpha=\frac{1}{2}(5-\sqrt{17})\approx 0\cdotp438$.

So if the row player rearranges $\frac{1}{2}\cdot0\cdotp438=0\cdotp219$ of his
probability to his best response row, both players have a strategy
that guarantees them a $0\cdotp438$\NNE.

\subsection{Communicating Mixed strategies}\label{sec:cms}

We describe how to communicate an approximation of the
mixed strategies that are computed, using $O(\log^2 n)$ bits.
We ultimately obtain an $\epsilon$ of $0\cdotp438+\delta$, for any $\delta>0$.

We first look at the case where one of the players, assume w.l.o.g.\ the
row player, has a payoff higher than $\alpha$ in the \NE\ of his
zero-sum game $(R,-R)$. The column player plays a pure best response to the
strategy of the row player, regardless of the support of the strategy of
the row player. So we mainly consider the row player.

The zero-sum game $(R,-R)$ gives a strategy pair $(\vc{x}^*,\vc{y}^*)$.
Fix $k=\frac{\ln n}{\delta^2}$ and form a multiset $A$ by sampling $k$
times from the set of pure strategies of the row player, independently
at random according to the distribution $\vc{x}^*$. Let $\vc{x}'$ be the
mixed strategy for the row player with a probability of $\frac{1}{k}$
for every member of $A$. We want the distribution $\vc{x}'$ to have a
payoff close to the payoff of $\vc{x}^*$. This corresponds to the
following event:
$$\phi=\{((\vc{x}')^TR\vc{y}^*)-((\vc{x}^*)^TR\vc{y}^*)<-\delta\}$$

As noted in~\cite{Lipton2003} the expression
$((\vc{x}')^TR\vc{y}^*)$ is essentially a sum of $k$ independent random
variables each of expected value $((\vc{x}^*)^TR\vc{y}^*)$, where every
random variable has a value between 0 and 1. This means we can bound the
probability that $\phi$ does not hold, which we will call $\bar{\phi}$. When
we apply a standard tail inequality \cite{Hoeffding63} to bound the
probability of $\bar{\phi}$, we get: $$\Pr[\bar{\phi}]\leq e^{-2k\delta^2}$$

With $k=\frac{\ln n}{\delta^2}$, this gives
$\Pr[\bar{\phi}]\leq\frac{1}{n^2}$ and $\Pr[\phi]\geq1-\frac{1}{n^2}$. If
$\vc{x}'$ does not give payoffs close enough to $\vc{x}^*$, we sample again.

The strategy $\vc{x}'$ has a guaranteed payoff of
$0\cdotp438+\delta-\delta=0\cdotp438$. This strategy is communicated to the column
player. The support of this strategy is logarithmic and all
probabilities are rational (multiples of $\frac{1}{k}$). Communication
of one pure strategy has a \CC\ of $O(\log n)$. This will give a \CC\
for $\vc{x}'$ of $O(\log^2 n)$.

The column player computes a pure strategy best response to $\vc{x}'$
and communicates this strategy in $O(\log n)$ to the row player. The
strategy of the row player might not yet lead to a $0\cdotp438$\NNE, his
payoff could be too low. As we have seen before, if the row player
redistributes at most $0\cdotp219$ of his probability, he is guaranteed to have
a strategy that leads to a $0\cdotp438$\NNE.

This change in strategy of the row player can decrease the payoff of the
column player by as much as $0\cdotp219$ and increase another pure strategy
by as much as $0\cdotp219$. His strategy was a best response, a 0\NNE, and
the improvement to another pure strategy is maximal $0\cdotp219+0\cdotp219=0\cdotp438$,
this leads to a $0\cdotp438$\NNE.

In the alternative case, where both players have a low ($<\alpha$)
payoff in their zero-sum games, the technique is essentially the same:
each player samples $k$ times from the opposing distribution, checks
that it limits his own payoff to at most $\alpha+\delta$, re-samples
as necessary, and communicates the $k$-sample.

\subsection{A $0\cdot732$-\WSNE\ procedure with limited communication}
\label{sec:wsne}

We give a variant of the algorithm of the previous section, that produces an
$\epsilon$-well-supported \NE\ for $\epsilon=\sqrt{3}-1$.
Like the previous algorithm, we will first search for
an $\alpha$\NNE\ and later find the optimal value for $\alpha$.

The algorithm starts in the same way as in Section~\ref{sec:epsilonnash}
with both players computing the \NE\ of zero-sum
games. The row player solves the zero-sum game $(R,-R)$ and the column
player solves $(-C,C)$. The two cases that arise are also the same; case
1 proceeds as in Section~\ref{sec:epsilonnash} while Case 2 requires a
variation to the algorithm.

\bigskip\noindent
{\bf Case~1: the value of both zero-sum games is $\leq\alpha$ to each player}

\smallskip\noindent
First consider the case where both players have a \NE\ with value
smaller than $\alpha$. The row player has a strategy pair
$(\vc{x}^*_r, \vc{y}^*_r)$ and the column player a
strategy pair $(\vc{x}^*_c, \vc{y}^*_c)$. The row
player communicates $\vc{y}^*_r$ to the column player and the
column player sends $\vc{x}^*_c$ to the row player. They will
now play the game with the strategy pair $(\vc{x}^*_c,
\vc{y}^*_r)$. If they play according to these strategies, then no pure strategy
yields a payoff of $\alpha$ or more, so note that the strategy profile is an
$\alpha$-\WSNE.

\bigskip\noindent
{\bf Case~2: one or both players can guarantee a payoff $>\alpha$}

\smallskip\noindent
Suppose that a player, assume w.l.o.g.\ the row player, has a payoff
more than $\alpha$ in the \NE\ of his zero-sum game $(R,-R)$.
Let the row player communicate this strategy $\vc{x}^*_r$ to the column player.
The column player computes a pure strategy best response $\vc{e}_j$ to $\vc{x}^*_r$
and communicates this strategy to the row player.
Because the row player had a payoff of at least $\alpha$ in the game
$(R,-R)$, he also has a payoff of at least $\alpha$ against $\vc{e}_j$.

At this point in the algorithm we have a strategy pair
$(\vc{x}^*_r,\vc{e}_j)$. The strategy of the column player is a
best response to $\vc{x}^*_r$, so his strategy has regret 0.
We have no guarantee on the performance of the row player's strategy, in the
context of a \WSNE.

As in the previous algorithm we allow the row player to shift some of his
probability to his best response to $\vc{e}_j$. Note that if we
shift $\frac{1}{2}\alpha$ of the probability of the row player, this
ensures the column player's payoffs vary by at most $\alpha$.

Let the best response of the row player to $\vc{e}_j$ have value $\beta\geq\alpha$.
The row player's payoff is a random variable $x$ that takes values
in $[0,1]$ with expectation $E(x)\geq\alpha$, since $\vc{x}^*_r$
is the security strategy for payoff matrix $R$. The maximum value $x$ can
take is $\beta$. The algorithm takes all strategies for which the row player's
payoff is less than $\beta-\alpha$, and replaces any probability allocated to
them by $\vc{x}^*_r$, to any strategy whose payoff is at least $\beta-\alpha$,
thus satisfying the conditions for the row player to also have an
$\alpha$-\WSNE.

We upper bound the probability $\Pr(x\leq\beta-\alpha)$ as follows.
Subject to $E(x)\geq\alpha$ and $\max(x)=\beta$, this is maximised when
$x$ takes values $\beta$ or $\beta-\alpha$.
Let $p=\Pr(x\leq\beta-\alpha)$. Then
\[
E(x)<p(\beta-\alpha)+(1-p)\beta=-\alpha p+\beta.
\]
We have $E(x)\geq\alpha$. Plugging that into the above,
\[
\alpha\leq-\alpha p+\beta,~~~~~~{\rm i.e.}~~p\leq\frac{\beta-\alpha}{\alpha}.
\]
To ensure that the amount of probability shifted is at most $p$, is
suffices to let $\frac{1}{2}\alpha < \frac{\beta-\alpha}{\alpha}$,
i.e. $\alpha^2+2\alpha-2\beta\geq 0$.
This is satisfied by $\alpha=-1+\sqrt{1+2\beta}$, so that the worst case value
of $\beta$ is 1, resulting in the claimed value of $\sqrt{3}-1\approx0\cdotp732$.

\section{Conclusions}\label{sec:conclusions}

Our results raise some open problems, such as how good an approximation
should be achievable in the communication-free setting, and how well we can
do in the setting of limited (two-way) communication.
Our communication-bounded algorithms are also based on algorithms that compute approximate
equilibria {\em in polynomial time}, and it would be very interesting if further upper bounds on the
communication complexity could be obtained for algorithms whose computational time
was not known to be polynomial.
Pastink~\cite{Pastink12} considers some related topics, including the communication
required for approximate equilibria of games of fixed size.
It may be that future work should address the issue of
communication protocols where the players have an incentive to report
their information truthfully.

We believe that the communication-limited algorithm for $0\cdotp438$-approximate Nash
equilibria is significant, also the $0\cdotp501$ lower bound in the communication-free
setting, since in the context of searching for $\epsilon$-approximate Nash
equilibria, $\epsilon=0\cdotp5$ frequently seems to arise as a limit on what is achievable.
For example, if we search for approximate equilibria of constant support, the
DMP-algorithm~\cite{Daskalakis2009} achieves this for $\epsilon=0\cdotp5$, however,
Feder et al.~\cite{feder2007} show that for $\epsilon<0\cdotp5$, the support size may need to
be logarithmic in $n$. (The corresponding logarithmic upper bound on the support size
that may be needed, is due to~\cite{Lipton2003}.) In a similar way, while Fictitious Play is known to guarantee
to find $\epsilon$-approximate equilibria for $\epsilon$ approaching $0\cdotp5$~\cite{Con2009}, it has
also been established that $\epsilon=0\cdotp5$ is, in the worst case, a lower bound on the approximation
quality attainable~\cite{GSSV2012}. And, as we find in Theorem~\ref{thm:oneway},
$0\cdotp5$ is also the best approximation that can be guaranteed when there is a
restriction to one-way communication. Finally, Fearnley et al.~\cite{FGGS12}
show that to find $\epsilon$-Nash equilibria with $\epsilon\geq\frac{1}{2}$, a
strictly smaller fraction of the payoffs of the game need to be checked, than
is needed for certain smaller positive values of $\epsilon$.

{\it Acknowledgements:}
The first author thanks Sergiu Hart for useful
discussions during the iAGT workshop in May 2011.

\end{document}